\DeclareMathOperator*{\argmin}{arg\,min}
\newtheorem{lemma}{Lemma}
\newtheorem{theorem}{Theorem}
\begin{document}

\title{Distance problems within Helly graphs and $k$-Helly graphs}
\author[1,2]{Guillaume Ducoffe}
\affil[1]{\small National Institute for Research and Development in Informatics, Romania}
\affil[2]{\small University of Bucharest, Romania}
\date{}

\maketitle

\begin{abstract}
The {\em ball hypergraph} of a graph $G$ is the family of balls of all possible centers and radii in $G$. It has {\em Helly number} at most $k$ if every subfamily of $k$-wise intersecting balls has a nonempty common intersection. A graph is {\em $k$-Helly} (or {\em Helly}, if $k=2$) if its ball hypergraph has Helly number at most $k$. We prove that a central vertex and {\em all} the medians in an $n$-vertex $m$-edge Helly graph can be computed w.h.p. in $\tilde{\cal O}(m\sqrt{n})$ time. Both results extend to a broader setting where we define a non-negative cost function over the vertex-set. For any fixed $k$, we also present an $\tilde{\cal O}(m\sqrt{kn})$-time randomized algorithm for radius computation within $k$-Helly graphs. If we relax the definition of Helly number (for what is sometimes called an ``almost Helly-type'' property in the literature), then our approach leads to an approximation algorithm for computing the radius with an additive one-sided error of at most some constant.
\end{abstract}

\section{Introduction}\label{sec:intro}

For any undefined graph terminology, see~\cite{BoM08,Die12,Gol04}.
We study the complexity of computing several distance parameters within some increasing hierarchy of graph classes.
Specifically, let $G=(V,E)$ be an undirected graph. The distance between two vertices $u$ and $v$, denoted by $dist(u,v)$, is equal to the minimum number of edges on a $uv$-path in $G$. The eccentricity of a vertex $v$ is defined as $e(v) = \max_{u \in V} dist(u,v)$. The total distance of a vertex $v$ is defined as $TD(v) = \sum_{u \in V} dist(u,v)$. 
We focus on two fundamental {\em facility location} problems:
\begin{itemize}
\item The {\sc Center} problem asks for a vertex $v$ such that $e(v)$ is minimized ({\it a.k.a.}, central vertex).
\item The {\sc Median} problem asks for a vertex $v$ such that $TD(v)$ is minimized ({\it a.k.a.}, median).
\end{itemize}
Facility location problems are about finding optimal locations, w.r.t. some distance and cost criteria, for opening new facilities such as schools or hospitals. Due to their practical relevance, such problems have been widely studied in mathematical programming, both on metric spaces and graphs. However, our motivation for studying these two above problems also comes from their importance in the fine-grained complexity study of polynomial-time solvable problems. Indeed, we observe that both the {\sc Center} and {\sc Median} problems can be solved in ${\cal O}(n^2)$ time, if we are given the distance-matrix of the graph. Computing the distance matrix, or equivalently solving the All-Pairs Shortest-Paths problem ({\sc APSP}), can be done in ${\cal O}(nm)$ time, or in $\tilde{\cal O}(n^{\omega})$ time~\cite{Sei92}, where $\omega < 2.273$ denotes the square matrix multiplication exponent. If we further assume the input graph to be sparse (or of arbitrary density, but assuming $\omega = 2$), then this is in ${\cal O}(n^{2+o(1)})$ time, that is optimal. By our previous observation, these above time complexity upper bounds also apply to the {\sc Center} and {\sc Median} problems. However, for a long time it was open whether these upper bounds are asymptotically tight. In 2016, Abboud et al. proved that it was indeed the case if we assume the so-called ``Hitting Set Conjecture'', that implies the Orthogonal Vector Conjecture~\cite{AVW16}. Specifically, their conjecture implies that for any $\varepsilon > 0$, neither the {\sc Center} nor the {\sc Median} problems can be solved in ${\cal O}(n^{2-\varepsilon})$ time on $n$-vertex $\tilde{\cal O}(n)$-edge graphs. In particular, this rules out the existence of an $\tilde{\cal O}(n^am^b)$-time algorithm, for every positive $a,b$ such that $a+b < 2$. We stress that this above conditional lower bound also holds for {\em radius} computation ({\it i.e.}, the seemingly weaker problem of computing the minimum eccentricity, but not necessarily a corresponding central vertex).

\medskip
\noindent
We follow a long line of work in Algorithmic Graph Theory, that consists in improving the complexity of distance problems on some special graph classes~\cite{AVW16,BCCV19+,BCD98,BHM18,Cab18,CaK09,CDV02,ChD94,ChD03,CDP18,CDHP01,Dam16,Dra89,DrN00,DNB97,Duc19,DHV19,DHV20,Epp00,FaP80,GKHM+18,Gol71,Ola90}. Specifically, the ball of center $v$ and radius $r$ is the set $N^r[v] = \{ u \in V \mid dist(u,v) \leq r \}$. The {\em ball hypergraph} of $G$, denoted by ${\cal B}(G)$, is the family of the balls of all possible centers and radii in $G$. That is, ${\cal B}(G) = \{ N^r[v] \mid v \in V, \ r \geq 0 \}$. Several interesting graph classes can be defined via the properties of their ball hypergraphs. For instance, the dually chordal graphs are exactly the graphs of which the ball hypergraph is a dual hypertree~\cite{BDCV98}. There exists a linear-time algorithm for computing the radius of a dually chordal graph~\cite{BCD98}. More recently, we proved in~\cite{DHV20}, with co-authors, that there exists a parameterized subquadratic-time algorithm for computing the radius, within all the graph classes whose ball hypergraphs have constantly upper-bounded VC-dimension. As a particular case, we got a truly subquadratic-time algorithm for radius computation within all the proper minor-closed classes of graphs. This work makes a new step toward finding a more general property of ball hypergraphs that we could still exploit for a fast radius computation. 

Specifically, a graph is {\em $k$-Helly} if every sub-family of $k$-wise intersecting balls has a non-empty common intersection. The $2$-Helly graphs have already received some attention in the literature, under the different names of {\em Helly graphs} (that we use in this paper), disk-Helly graphs~\cite{BaP91} or absolute retracts~\cite{BaP89}. Notably, the interval graphs, the strongly chordal graphs and the dually chordal graphs are all subclasses of the Helly graphs~\cite{BDCV98}. In fact, a celebrated result in Metric Graph Theory is that {\em every} graph is an isometric (distance-preserving) subgraph of some Helly graph~\cite{Dre84,Isb64}. Therefore, unlike most of the special graph classes studied in the literature, the class of Helly graphs {\em cannot} be defined via a set of forbidden patterns such as (induced) subgraphs, minors or bounded expansion. Note that in general, a bounded VC-dimension does {\em not} imply a bounded Helly number, but it does imply a slightly weaker {\em fractional Helly property}~\cite{Mat04}. We refer to the survey~\cite{BaC08}, that has devoted a full section to the properties of Helly graphs. See also~\cite{BaC02,BaP89,BaP89b,BaP91,CLP00,Dra89,Dra93,DrB96,DuD19+,Pol01,Pol03} for other properties, and, {\it e.g.},~\cite{BaP89,Dra89,LiS07} for improved polynomial-time recognition algorithms for the Helly graphs. For $k \geq 3$, we stress that we can decide in $n^{{\cal O}(k)}$ time whether an $n$-vertex graph is $k$-Helly, that is polynomial for every fixed $k$; indeed, a graph is $k$-Helly if and only if for every $(k+1)$-subset $S$ of vertices, the balls intersecting $S$ in at least $k$ vertices have a non-empty common intersection~\cite{Ber73}. The running-times of these recognition algorithms are super-quadratic, but we do {\em not} need to execute them before we can apply our own algorithms in this paper. 

\smallskip
\noindent
Recently, with Dragan, we initiated the fine-grained complexity study of polynomial-time solvable distance problems within Helly graphs~\cite{DuD19+}. Specifically, we proved that the radius of an $n$-vertex $m$-edge Helly graph can be computed w.h.p. in $\tilde{\cal O}(m\sqrt{n})$ time. Our result followed from a known reduction to the computation of a diametral pair, {\it i.e.}, a pair of vertices whose distance in the graph is maximum. Indeed, for a Helly graph of diameter $D$, the radius is exactly $\lceil D/2 \rceil$~\cite{Dra89}. Unfortunately, we failed in also computing a central vertex with this approach.

\subsection*{Our Contributions}

\paragraph{Helly graphs.} We further refine the framework in~\cite{DuD19+}, in order to solve the {\sc Center} and {\sc Median} problems in truly subquadratic-time within the Helly graphs.
\begin{itemize}
\item Our first result in this paper is an algorithm for computing a central vertex in a Helly graph, that runs in $\tilde{\cal O}(m\sqrt{n})$ time w.h.p. (Theorem~\ref{thm:helly-center}). For that, we use the nice property that the eccentricity function of a Helly graph is {\em unimodal}: every local minimum is a global minimum~\cite{Dra89}. The latter suggests the following natural {\em local-search} strategy in order to compute a central vertex: start from some vertex with a hopefully small eccentricity, then repeatedly find a neighbour with a better eccentricity until a local minimum is found. Obvious drawbacks to this approach are: the time needed for each step of the local search, and the total number of steps. In order to overcome the first issue, we use and further refine some ``gated'' properties of the sets of small diameter in a Helly graph. We handle the second issue with a classical random sampling technique, that allows us to find, amongst $\tilde{\cal O}(\sqrt{n})$ candidates, a start vertex which ensures a sufficiently small (sublinear) total number of steps. 
\item Our second main result is a similar algorithm for computing the median in a Helly graph, that also runs in $\tilde{\cal O}(m\sqrt{n})$ time w.h.p. (Theorem~\ref{thm:helly-median}). Indeed, since the total distance function of a Helly graph is also unimodal~\cite{BaC02}, we can use the same local-search strategy as before in order to compute a median. For a Helly graph, the median is a clique~\cite{BaC02}. By using this nice structural property, we can compute all the medians in one more step of the local search.
\end{itemize}
Our approach still works for a more general version of the {\sc Center} and {\sc Median} problems: where we assign costs to the vertices. We stress that in~\cite{DuD19+}, we used different properties of unimodal functions in order to compute a diametral pair. In an upcoming paper~\cite{DDG19+}, based on the techniques developed here and in~\cite{DuD19+}, we prove that we can also compute the center of a Helly graph in truly subquadratic time ({\it i.e.}, the set of all its central vertices). Note that for a Helly graph, computing the center is equivalent to computing all the eccentricities~\cite{Dra89}.

\paragraph{$k$-Helly graphs.} Our third main result in the paper is a randomized algorithm for computing the radius of a $k$-Helly graph, which runs w.h.p. in $\tilde{\cal O}(m\sqrt{kn})$ time (Theorem~\ref{thm:k-helly}). In particular, our result implies that the hard instances for radius computation must have a quasi linear Helly number. Furthermore, if we relax the Helly property, as described next, then we obtain an approximation algorithm for computing the radius. Specifically, let us call a graph $(k,\alpha)$-{\em Helly} if every family of $k$-wise intersecting balls has a nonempty common intersection when we increase the radius of each ball by $\alpha$. For the special case $k=2$, this relaxed property has been studied under the names of Helly-gap~\cite{DrG020} or coarse Helly property~\cite{CCGH+20}. Our algorithm outputs the radius with some additive one-sided error in ${\cal O}(\alpha)$. 

Closest to our work on the $k$-Helly graphs is a very general framework by Amenta~\cite{Ame94}. Namely, she proved that on set families of constant Helly number, deciding whether these sets have a nonempty common intersection could be rephrased, under some mild conditions, as a so-called Generalized Linear Program (GLP) with a bounded combinatorial dimension. The GLP's are a broad generalization of linear programming, for which the prototypical problem is the computation of a central point in Euclidean spaces of bounded dimension. If a GLP has a bounded combinatorial dimension, then we can solve it in randomized linear time, by using Clarkson's algorithm~\cite{Cla95}. However, the latter relies on two primitives, namely basis computation and violation testing. We do not know how to implement these two basic operations in constant time, nor even in sublinear time, for the special case of graphs. Therefore, we had to differ from Amenta's framework in order to prove our results, although we stayed closed in spirit to the latter. Our algorithm requires $k$ to be given, and it fails in also computing a central vertex. Roughly, this is because we fail in satisfying another technical condition from Amenta's framework, namely the so-called ``Unique Minimum Condition''. We left open whether, for any fixed integer $k \geq 3$, a central vertex can be computed in truly subquadratic time whithin the $k$-Helly graphs.

\section{Facility Location problems on Helly graphs}\label{sec:helly}
In what follows, let $c : V \to \mathbb{R}^+$ be a cost assignment over the vertex-set of some graph $G=(V,E)$.
We define the $c$-eccentricity and the total $c$-distance of a vertex $v$ as $e_c(v):= \max_{u \in V} c(u) \cdot dist(u,v)$ and $TD_c(v) := \sum_{u \in V} c(u) \cdot dist(u,v)$, respectively. The main results of this section are truly subquadratic algorithms, for the {\sc Center} and {\sc Median} problems, respectively, on Helly graphs (Theorems~\ref{thm:helly-center} and~\ref{thm:helly-median}). More generally, for any function $c$ that assigns nonnegative costs on the vertices, we can compute a vertex minimizing its $c$-eccentricity, resp. {\em all} vertices minimizing their total $c$-distance. For that, we first need to introduce several intermediate results on Helly graphs and unimodal functions.

\paragraph{Unimodal functions.}
Let $G=(V,E)$ be a graph. In what follows, we denote by $N(v)$ the neighbours of a vertex $v$.
Furthermore, let $f : V \to \mathbb{N}$. A local minimum for $f$ is a vertex $v$ such that, for every $u \in N(v)$, we have $f(u) \geq f(v)$. If furthermore $f(v) = \min_{w \in V} f(w)$, then $v$ is a global minimum for $f$. We denote by $\argmin{(f)}$ the set of all the global minima. Finally, we recall that $f$ is called unimodal if every local minimum is also a global minimum.

\begin{lemma}[~\cite{Dra89}]\label{lem:unimod-ecc}
If $G = (V,E)$ is a Helly graph, then for any non-negative cost function $c$, the function $e_c : V \to \mathbb{R}^+$ that maps every vertex $v$ to its $c$-eccentricity is unimodal. 
\end{lemma}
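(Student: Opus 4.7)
The plan is a proof by contradiction that hinges on the Helly property of balls. Suppose $v$ is a local minimum of $e_c$ with $e_c(v)=R$, yet not a global minimum, so some $w\in V$ attains $e_c(w)=R^{*}<R$. I will construct a neighbour $u\in N(v)$ with $e_c(u)<R$, contradicting local minimality.

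First pick a real value $r$ in the interval $[\,\max(R^{*},R-c_{\min}),\,R\,)$, where $c_{\min}$ is the smallest positive value of $c$; this interval is nonempty because $R^{*}<R$ and $c_{\min}>0$. For each $a\in V$ with $c(a)>0$ set $B_a:=N^{\lfloor r/c(a)\rfloor}[a]$, and for $c(a)=0$ set $B_a:=V$. All $B_a$ are balls of $G$, and I consider the family $\mathcal{F}:=\{B_a:a\in V\}\cup\{N^{1}[v]\}$. Next I verify that $\mathcal{F}$ is pairwise intersecting. For any two $B_a,B_b$, the common point is $w$: from $c(a)\,dist(a,w)\leq e_c(w)\leq r$ one gets $dist(a,w)\leq\lfloor r/c(a)\rfloor$, so $w\in B_a$, and symmetrically $w\in B_b$. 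For $B_a\cap N^{1}[v]$, the bound $c(a)\,dist(a,v)\leq R$ gives $dist(a,v)\leq\lfloor R/c(a)\rfloor$, while the choice $r\geq R-c(a)$ gives $\lfloor R/c(a)\rfloor\leq\lfloor r/c(a)\rfloor+1$; hence either $v\in B_a$, or the neighbour of $v$ on a shortest $v$--$a$ path lies in $B_a\cap N^{1}[v]$. Invoking the Helly property on $\mathcal{F}$ then produces a common vertex $u$: if $u=v$ then every $B_a$ contains $v$, forcing $e_c(v)\leq r<R$, a contradiction; hence $u\in N(v)$, and $u\in\bigcap_a B_a$ yields $e_c(u)\leq r<R$, contradicting the local minimality of $v$.

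The main obstacle I expect is the weighted setting: in the unweighted case one simply works with the integer radius $R-1$, but here $e_c$ can take densely spaced real values and there is no canonical integer drop. The trick is to slip $r$ just below $R$, yet no more than $c_{\min}$ below, so that shrinking each radius $\lfloor R/c(a)\rfloor$ to $\lfloor r/c(a)\rfloor$ costs at most one unit, which is exactly the slack needed for the ball $N^{1}[v]$ to meet every $B_a$; positivity of both $R-R^{*}$ and $c_{\min}$ makes the admissible interval for $r$ nonempty, unlocking the one-shot Helly argument.
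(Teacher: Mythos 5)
Your proof is correct, and it is essentially the same Helly-type argument that underlies the cited result of Dragan (for $c\equiv 1$ one takes the balls $N^{R-1}[a]$ together with $N^{1}[v]$), correctly adapted to the weighted setting via the floor radii $\lfloor r/c(a)\rfloor$ and the $c_{\min}$ slack that keeps $N^{1}[v]$ meeting every $B_a$. The only degenerate case, $c\equiv 0$ (where $c_{\min}$ is undefined), is harmless since then $e_c\equiv 0$ and the contradiction hypothesis $R^{*}<R$ already forces some positive cost, so nothing is missing.
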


\begin{lemma}[~\cite{BaC02}]\label{lem:unimod-td}
If $G = (V,E)$ is a Helly graph, then for any non-negative cost function $c$, the function $TD_c : V \to \mathbb{R}^+$ that maps every vertex $v$ to its total $c$-distance is unimodal. Moreover, the median of $G$ is a clique. 
\end{lemma}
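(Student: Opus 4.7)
My plan is to prove both parts via a single Helly-based descent construction at any non-minimal vertex, in the spirit of the proof of Lemma~\ref{lem:unimod-ecc}. The driving principle is that in a Helly graph any collection of pairwise compatible local improvements at a vertex $v$ can be aggregated, using the Helly property on balls, into a single neighbour $u\in N(v)$ that realises all of them at once.

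For the unimodality of $TD_c$, I argue by contraposition. Assume $v$ is not a global minimum and pick $w$ with $TD_c(w)<TD_c(v)$ minimising $dist(v,w)$; if $dist(v,w)=1$ then $w$ itself witnesses failure of local minimality, so assume $dist(v,w)\geq 2$. I then consider the ball family
\[
\mathcal{F}=\{N^1[v],\ N^{dist(v,w)-1}[w]\}\ \cup\ \{N^{dist(v,x)}[x] : dist(x,w)\leq dist(x,v)\}.
\]
Pairwise intersection is checked ball by ball: $v$ itself lies in every $N^{dist(v,x)}[x]$ and in $N^1[v]$; a neighbour of $v$ on a shortest $vw$-path lies in $N^1[v]\cap N^{dist(v,w)-1}[w]$; and for $N^{dist(v,w)-1}[w]\cap N^{dist(v,x)}[x]$ the vertex one edge from $w$ on a shortest $wx$-path works, thanks to $dist(w,x)\leq dist(v,x)$ and $dist(v,w)\geq 2$. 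The Helly property yields a common vertex $u$; by construction $u\in N^1[v]$ and $dist(u,w)=dist(v,w)-1$, so $u$ is a strict neighbour of $v$. For each $x$ with $dist(x,w)\leq dist(x,v)$ one has $dist(u,x)\leq dist(v,x)$ by membership in $N^{dist(v,x)}[x]$; for the remaining $x$, the triangle inequality through $v$ combined with integrality of distances gives $dist(u,x)\leq dist(v,x)+1\leq dist(w,x)$. Summing against $c$,
\[
TD_c(u)-TD_c(v)\ \leq\ TD_c(w)-TD_c(v)\ <\ 0,
\]
contradicting local minimality.

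For the median clique, let $M=\argmin(TD_c)$ and take $u_1,u_2\in M$ with $dist(u_1,u_2)\geq 2$. The same construction applied with $v=u_1$ and $w=u_2$ (now with equal total $c$-distance) yields a neighbour $u$ of $u_1$ with $TD_c(u)\leq TD_c(u_2)=TD_c(u_1)$ and $dist(u,u_2)=dist(u_1,u_2)-1$; hence $u\in M$. Iterating strictly decreases the distance between two medians and eventually contradicts $dist(u_1,u_2)\geq 2$, so $M$ induces a clique.

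The step I expect to require the most care is reusing the construction in the clique argument with the hypothesis $TD_c(w)<TD_c(v)$ relaxed to equality: the chain of inequalities above still forces $TD_c(u)\leq TD_c(w)$, but one must trace that the strict sign was used only in the very last step. Should this accounting prove subtler than anticipated, a robust fallback is to apply the already-established unimodality half directly to the subgraph induced by $M$, using the gated behaviour of small-diameter sets in Helly graphs to conclude that $M$ itself has diameter at most one.
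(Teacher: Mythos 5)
The Helly aggregation itself is sound: your family does pairwise intersect (given $dist(v,w)\geq 2$), and it produces a neighbour $u$ of $v$ with $dist(u,w)=dist(v,w)-1$, with $dist(u,x)\leq dist(v,x)$ for every $x$ in $X_1=\{x: dist(x,w)\leq dist(x,v)\}$, and $dist(u,x)\leq dist(v,x)+1\leq dist(w,x)$ for $x$ in the complement $X_2$. The gap is the summation step. From these pointwise bounds you compare $dist(u,x)$ with $dist(v,x)$ on $X_1$ but with $dist(w,x)$ on $X_2$, so all you actually get is
\[
TD_c(u)\ \leq\ \sum_{x} c(x)\,\max\bigl(dist(v,x),dist(w,x)\bigr),
\]
which is at least $\max\bigl(TD_c(v),TD_c(w)\bigr)$ and therefore yields no contradiction. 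The claimed chain $TD_c(u)-TD_c(v)\leq TD_c(w)-TD_c(v)$ would require $dist(u,x)\leq dist(w,x)$ for \emph{all} $x$, and this already fails at $x=w$, where $dist(u,w)=dist(v,w)-1\geq 1>0$; symmetrically, a comparison with $TD_c(v)$ fails on $X_2$. Note also that your final inequality nowhere uses the minimality of $dist(v,w)$ — a warning sign, since $u$ gains at most one unit per vertex over $v$ while $w$ may gain up to $dist(v,w)$ units per vertex, and bridging exactly this discrepancy (via the minimal choice of $w$, an induction on $dist(v,w)$, or radii tied to $w$'s distances rather than $v$'s) is the missing heart of the proof. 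Concretely, on the path $x_1x_2x_3x_4x_5$ with unit costs, taking $v=x_1$, $w=x_3$, your family forces $u=x_2$, and $TD(u)=7>6=TD(w)$, so the inference as written is unsound (the lemma of course still holds there). For reference, the paper does not prove this statement; it imports it from \cite{BaC02}, where the unimodality of $TD_c$ on Helly graphs comes out of a more delicate local-to-global analysis of median functions.

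The clique part has an additional, independent flaw: even granting $TD_c(u)\leq TD_c(u_2)$, your iteration only manufactures further medians closer and closer to $u_2$, ending with an \emph{adjacent} pair of medians — which is perfectly consistent and does not contradict the assumed existence of the original non-adjacent pair $u_1,u_2$. To prove the clique property you must contradict global minimality, e.g.\ exhibit a vertex of strictly smaller total $c$-distance from the assumption that two medians are at distance exactly two (a separate Helly argument on $N[u_1]$, $N[u_2]$ and suitable balls around the remaining vertices is the natural route). The fallback you sketch (applying unimodality inside the subgraph induced by $\argmin(TD_c)$) is also not justified as stated, since unimodality of $TD_c$ on $G$ says nothing a priori about the induced subgraph on the median set.
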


\noindent
By $U(p)$ we mean a vertex-subset in which every vertex was added independently at random with probability $p$. We say that a path $P = [v_0,v_1,\ldots,v_{\ell}]$ is $f$-monotone if we have $f(v_i) > f(v_{i+1})$ for every $0 \leq i < \ell$. Our starting point for this section is the following observation: 

\begin{lemma}\label{lem:unimodal}
Let $G=(V,E)$ be a graph and $f : V \to \mathbb{N}$ be unimodal.
For any random subset $U(p)$, if $u \in U(p)$ minimizes $f(u)$, then w.h.p. every $f$-monotone path between $u$ and $\argmin{(f)}$ has length at most in $\tilde{\cal O}(p^{-1})$. 
\end{lemma}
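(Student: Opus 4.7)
The plan is to show, by a union-bound/coupon-collector-style argument on the random set $U(p)$, that the number of vertices with $f$-value strictly less than $f(u)$ is already $\tilde{\mathcal O}(p^{-1})$ with high probability; this immediately caps the length of any $f$-monotone path emanating from $u$, since such a path visits a strictly decreasing (hence pairwise distinct in $f$-value) sequence of vertices of value below $f(u)$.

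More precisely, I will fix a constant $c>0$ (to be tuned at the end) and let $k := \lceil c\ln n /p\rceil$. Enumerating the vertices in non-decreasing order of $f$-value as $v^{(1)},v^{(2)},\ldots,v^{(n)}$, set $A := \{v^{(1)},\ldots,v^{(k+1)}\}$. Since the $k+1$ events ``$v^{(i)}\in U(p)$'' are independent,
\[
\Pr\bigl[A \cap U(p)=\emptyset\bigr] \;=\; (1-p)^{k+1} \;\le\; e^{-p(k+1)} \;\le\; n^{-c}.
\]
Whenever the complementary event occurs, some $w\in A\cap U(p)$ exists, and because $u$ minimises $f$ over $U(p)$ we have $f(u)\le f(w)\le f(v^{(k+1)})$. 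The definition of $v^{(k+1)}$ then forces $|\{x\in V : f(x) < f(u)\}|\le k$, since any such $x$ would rank strictly below $v^{(k+1)}$ in the sorted order and hence lie in $\{v^{(1)},\ldots,v^{(k)}\}$.

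Now consider any $f$-monotone path $P=[u=v_0,v_1,\ldots,v_\ell]$ with $v_\ell\in\argmin(f)$. By the definition of $f$-monotonicity, $f(v_0) > f(v_1) > \cdots > f(v_\ell)$, so $v_1,\ldots,v_\ell$ are $\ell$ distinct vertices all satisfying $f(v_i)<f(u)$. Combining with the previous paragraph, on the high-probability event we obtain
\[
\ell \;\le\; |\{x\in V : f(x)<f(u)\}| \;\le\; k \;=\; \mathcal O(\log(n)/p),
\]
which is the desired $\tilde{\mathcal O}(p^{-1})$ bound. Choosing $c$ as large as needed makes the failure probability $n^{-c}$ polynomially small.

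There is not really a hard step here — the proof is a one-shot probabilistic argument — but the point to be careful about is that the bound applies \emph{uniformly} to all monotone paths out of $u$ at once, rather than requiring a union bound over paths: the whole argument is driven by the single scalar quantity $|\{x:f(x)<f(u)\}|$, which caps the length of every $f$-monotone path starting at $u$ simultaneously. I also note that unimodality of $f$ is not actually needed for this lemma; it is stated in the hypotheses only because the subsequent applications (to $e_c$ and $TD_c$ via Lemmas~\ref{lem:unimod-ecc} and~\ref{lem:unimod-td}) will use unimodality to guarantee that following such a monotone path terminates at a true global minimum.
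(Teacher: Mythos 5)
Your proof is correct, and it takes a mildly but genuinely different route from the paper's. The paper argues directly on the path: it fixes a longest $f$-monotone path $P$ from $u$ to $\argmin(f)$ and observes that, by minimality of $f(u)$ over $U(p)$, none of the vertices of $V(P)\setminus\{u\}$ can lie in $U(p)$, an event of probability at most $(1-p)^{|V(P)|-1}\leq n^{-2}$ once the length exceeds $1+2p^{-1}\log n$. You instead control the single quantity $|\{x\in V : f(x)<f(u)\}|$: w.h.p.\ the sample hits the set $A$ of the $k+1$ vertices of smallest $f$-value, with $k=\lceil c\ln n/p\rceil$, so $f(u)$ is at most the $(k+1)$-st smallest value, at most $k$ vertices beat $u$, and any $f$-monotone path from $u$ visits pairwise distinct such vertices. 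Both proofs rest on the same sampling estimate $(1-p)^{\Theta(\log n/p)}\leq n^{-\Theta(1)}$, but your formulation has a technical advantage: the bad event $A\cap U(p)=\emptyset$ is defined independently of $u$ and of any particular path, which sidesteps the conditioning subtlety in the paper's computation (there $P$ is chosen after seeing $u$, hence after seeing $U(p)$; to be fully rigorous one should fix a longest monotone path $P_u$ for each candidate start $u$ and union-bound over the $n$ choices). Your bound is also slightly stronger, as it caps all $f$-monotone paths leaving $u$, not only those ending in $\argmin(f)$. Finally, your remark that unimodality is not used in this lemma is accurate; the paper's proof does not use it either, and it only matters for the subsequent local-search applications.
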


\begin{proof}
Fix a longest $f$-monotone path $P$ between $u$ and $\argmin{(f)}$, and assume that it has a length $> 1 + 2p^{-1}\log{n}$. By minimality of $f(u)$, we have $\left(V(P) \setminus \{u\}\right) \cap U(p) = \emptyset$. However, this can only happen with probability $(1-p)^{|V(P)|-1} \leq (1-p)^{\frac{2\log{n}}p} \leq n^{-2}$.
\end{proof}

\paragraph{Gates and Pseudo-gates.}
Given a graph $G$ and a vertex $v$, we recall that the ball of center $v$ and radius $r$ is denoted by $N^r[v]$.
For any vertex-subset $S$ we define $dist(v,S) = \min_{s \in S} dist(v,s)$, and we set $Pr(v,S) = \{ s \in S \mid dist(v,s) = dist(v,S)\}$.
The (weak) diameter of a set $S$ is equal to $diam(S) := \max_{x,y \in S} dist(x,y)$. 

\begin{lemma}[~\cite{DuD19+}]\label{lem:gated}
Let $G$ be a Helly graph and $S$ be a subset of weak diameter at most two. Then, for any $v \notin S$ there exists a vertex $g_S(v) \in N^{dist(v,S)-1}[v] \cap \bigcap \{ N(x) \mid x \in Pr(v,S) \}$. Moreover, we call $g_S(v)$ a {\em gate} of $v$.
\end{lemma}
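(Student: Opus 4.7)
The plan is to produce $g_S(v)$ as a common element of an explicit family of balls, then invoke the $2$-Helly property of $G$. Write $d := dist(v,S)$; since $v \notin S$ we have $d \geq 1$, so the ball $N^{d-1}[v]$ is well-defined and nonempty. Consider the family
\[
\mathcal{F} = \{\, N^{d-1}[v] \,\} \,\cup\, \{\, N[x] \mid x \in Pr(v,S) \,\}.
\]
Any common vertex $g$ of $\mathcal{F}$ is automatically in $N^{d-1}[v]$, and for every $x \in Pr(v,S)$ it lies in $N[x]$, so it is either equal to $x$ or adjacent to $x$. The first case is impossible, because $dist(v,g) \leq d-1 < d = dist(v,x)$. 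Hence $g \in N(x)$ for every $x \in Pr(v,S)$, and setting $g_S(v) := g$ gives the conclusion.

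It remains to verify that the balls in $\mathcal{F}$ pairwise intersect, so that the Helly property delivers a common vertex. For any $x \in Pr(v,S)$, choose a shortest $v$--$x$ path and let $w$ be the neighbour of $x$ on it; then $dist(v,w) = d-1$, so $w \in N^{d-1}[v] \cap N[x]$. For any two vertices $x, y \in Pr(v,S)$, the weak diameter assumption gives $dist(x,y) \leq diam(S) \leq 2$: if $x = y$ or $xy \in E$ then $x \in N[x] \cap N[y]$, and otherwise $dist(x,y) = 2$ and any common neighbour of $x$ and $y$ witnesses the intersection. This exhausts the pairs in $\mathcal{F}$.

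The only place where anything nontrivial enters is the pairwise intersection of the ``small'' balls $N[x], N[y]$, which is precisely where the hypothesis $diam(S) \leq 2$ is used; everything else is routine shortest-path bookkeeping. I would therefore view the main (and really the only) conceptual step as recognising that $\mathcal{F}$ is the right family to feed to the $2$-Helly property, with the radius $d-1$ of the ball around $v$ chosen tightly enough to rule out $g = x$ and thereby upgrade $g \in N[x]$ to $g \in N(x)$.
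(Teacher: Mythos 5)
Your proof is correct, and it is essentially the same argument the paper relies on: Lemma~\ref{lem:gated} is quoted from~\cite{DuD19+} without a proof here, but your construction---apply the Helly property to the family consisting of $N^{dist(v,S)-1}[v]$ and the unit balls $N[x]$ for $x \in Pr(v,S)$, using $diam(S)\leq 2$ for the pairwise intersections and the strict radius $dist(v,S)-1$ to exclude $g=x$---is exactly the step reproduced in the second half of the paper's proof of Lemma~\ref{lem:pseudo-gate}. No gaps; your explicit verification of the pairwise intersections is a welcome addition to what the paper leaves implicit.
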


In~\cite{DuD19+}, we introduce a complementary notion of ``pseudo-gates'' for the subclass of the $C_4$-free Helly graphs. The latter notion was key to our linear-time algorithm for computing the diameter of the graphs in this subclass.
We now generalize~\cite[Lemma 19]{DuD19+} to all the Helly graphs.

\begin{lemma}\label{lem:pseudo-gate}
Let $G$ be a Helly graph and let $S$ be such that $diam(S) \leq 2$. If $v \notin S$ and $dist(v,S) = r -1$, then there exists a vertex $pg_S(v) \in N^{r-1}[v]$ such that $S \cap N^{r}[v] \subseteq N[pg_S(v)]$. Moreover, $pg_S(v)$ is in the closed neighbourhood of some gate of $v$. We call $pg_S(v)$ a {\em pseudo-gate} of $v$.
\end{lemma}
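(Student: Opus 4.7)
The plan is to apply the Helly property (Helly number $2$) twice. First, I would construct a candidate pseudo-gate $w \in N^{r-1}[v]$ whose closed neighbourhood contains every vertex of $T := S \cap N^{r}[v]$; then I would exhibit a gate $g$ of $v$ with $w \in N[g]$, which shows that $pg_S(v) := w$ lies in the closed neighbourhood of some gate.

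For the first step, I would consider the family $\mathcal{F}_1 = \{N^{r-1}[v]\} \cup \{N[t] \mid t \in T\}$. All members are balls, so by the Helly property it is enough to check pairwise intersections. $N^{r-1}[v] \cap N[t]$ contains $t$ itself if $t \in N^{r-1}[v]$, and otherwise the predecessor of $t$ on any shortest $v$-$t$ path. For $t_1, t_2 \in T \subseteq S$, since $diam(S) \leq 2$ the balls $N[t_1]$ and $N[t_2]$ share either one of the $t_i$ or a common neighbour. Thus $\mathcal{F}_1$ has a common point $w$; by construction $dist(v,w) \leq r-1$ and $T \subseteq N[w]$, as required of a pseudo-gate.

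For the second step, I would consider $\mathcal{F}_2 = \{N^{r-2}[v],\, N[w]\} \cup \{N[s] \mid s \in Pr(v,S)\}$. The pairwise intersections are: $N^{r-2}[v] \cap N[w]$ (take $w$ if $dist(v,w) \leq r-2$, else the predecessor of $w$ on a $v$-$w$ shortest path); $N^{r-2}[v] \cap N[s]$ (the predecessor of $s$ on a $v$-$s$ shortest path, since $dist(v,s) = r-1$); $N[w] \cap N[s]$, which contains $s$ itself because $Pr(v,S) \subseteq T \subseteq N[w]$; and $N[s_1] \cap N[s_2]$, via a common neighbour of $s_1,s_2$, using $dist(s_1,s_2) \leq 2$. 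By Helly, there is $g \in \bigcap \mathcal{F}_2$, satisfying $dist(v,g) \leq r-2$, $g \in N[w]$, and $g \in N[s]$ for every $s \in Pr(v,S)$. The strict inequality $dist(v,g) \leq r-2 < r-1 = dist(v,s)$ forces $g \neq s$, so in fact $g \in N(s)$; Lemma~\ref{lem:gated}'s characterisation then certifies that $g$ is a gate of $v$, and $g \in N[w]$ is equivalent to $w \in N[g]$.

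The subtle point, and the reason the argument has to be two-staged, is that one cannot directly add $N[g_S(v)]$ to $\mathcal{F}_1$: for $t \in T \setminus Pr(v,S)$ we only get $dist(g_S(v),t) \leq 1 + diam(S) \leq 3$, which need not guarantee that $N[g_S(v)]$ and $N[t]$ meet. Producing $w$ first and then exploiting the inclusion $Pr(v,S) \subseteq T \subseteq N[w]$ to secure the pairwise intersections inside $\mathcal{F}_2$ is the essential manoeuvre.
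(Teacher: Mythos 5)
Your proof is correct and follows essentially the same two-stage argument as the paper: first apply the Helly property to $\{N^{r-1}[v]\}\cup\{N[t]\mid t\in S\cap N^r[v]\}$ to get the pseudo-gate, then to $\{N^{r-2}[v],N[pg_S(v)]\}\cup\{N[s]\mid s\in Pr(v,S)\}$ to get a gate in its closed neighbourhood. Your verification of the pairwise intersections (and of the strict inequality forcing $g\in N(s)$) simply spells out details the paper leaves implicit.
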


\begin{proof}
Since the balls $N^{r-1}[v]$ and $N[x], \forall x \in S \cap N^r[v]$ pairwise intersect, by the Helly property there exists a vertex $pg_S(v) \in N^{r-1}[v] \cap \bigcap \{ N[x] \mid x \in S \cap N^r[v] \}$. Now, consider the balls $N^{r-2}[v], N[pg_S(v)]$ and $N[y], \forall y \in Pr(v,S)$. Again, these balls pairwise intersect, and so, by the Helly property, there exists a vertex $v^* \in N^{r-2}[v] \cap \{ N[z] \mid z \in Pr(v,S) \cup \{pg_S(v)\} \}$. By construction, $v^* \in N[pg_S(v)]$ is a gate of vertex $v$. 
\end{proof}

Let $\delta$ be the Kronecker symbol (equal to $1$ if the predicate in the subscript is true, and to $0$ otherwise).
By combining Lemmata~\ref{lem:gated} and~\ref{lem:pseudo-gate}, we are now ready to prove our main technical lemma.

\begin{lemma}\label{lem:local-step}
Let $G=(V,E)$ be a Helly graph, equipped with a non-negative cost function $c$.
For any $v \in V$ and $A \subseteq V$, we can compute the following three values for every neighbour $u \in N(v)$, in total ${\cal O}(m)$ time:
\begin{itemize}
\item $q_+(u,A) = \sum_{w \in A} c(w) \cdot \delta_{dist(u,w) \ > \ dist(v,w)}$;
\item $q_=(u,A) = \sum_{w \in A} c(w) \cdot \delta_{dist(u,w) \ = \ dist(v,w)}$;
\item $q_-(u,A) = \sum_{w \in A} c(w) \cdot \delta_{dist(u,w) \ < \ dist(v,w)}$.
\end{itemize}
\end{lemma}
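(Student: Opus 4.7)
The plan is to combine BFS from $v$ with the compact ``star'' representations of the closer and not-farther neighbourhoods of $v$ furnished by Lemmas~\ref{lem:gated} and~\ref{lem:pseudo-gate}. Since $u \in N(v)$, the triangle inequality forces $dist(u,w) - dist(v,w) \in \{-1,0,+1\}$ for every $w$; hence the three indicators partition $A$, and it suffices to compute, for each $u \in N(v)$, the not-farther sum $\sum_{w: dist(u,w) \leq dist(v,w)} c(w)$ and the closer sum $\sum_{w: dist(u,w) < dist(v,w)} c(w)$. The three target values then follow by arithmetic from $C_A := \sum_{w \in A} c(w)$. My first step is a BFS from $v$ in $O(m)$ time to obtain $d_v(w) := dist(v,w)$; the easy case $w \in A \cap N[v]$ is handled by direct inspection of $N(w)$, which costs $O\bigl(\sum_{w \in N[v]} \deg(w)\bigr) = O(m)$ by the handshake lemma.

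For each $w \in A \setminus N[v]$ with $d := d_v(w) \geq 2$, I apply Lemmas~\ref{lem:gated} and~\ref{lem:pseudo-gate} to $S := N[v]$, whose weak diameter is at most $2$. They yield a gate $g_w$ at distance at most $d-2$ from $w$ that is adjacent to every closer neighbour of $v$, together with a pseudo-gate $pg_w \in N(v)$ at distance $d-1$ from $w$ satisfying $N[v] \cap N^d[w] \subseteq N[pg_w]$. The converse implications are immediate from the triangle inequality: for any $u \in N(v)$,
\[
u \in N[g_w] \iff dist(u,w) = d-1, \qquad u \in N[pg_w] \iff dist(u,w) \leq d.
\]
Thus both the closer and the not-farther subsets of $N(v)$ are exactly the intersections of $N(v)$ with a closed neighbourhood, indexed by a single centre per $w$.

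With the centres $g_w$ and $pg_w$ in hand, the two target sums are produced by bucketing. For every vertex $z$ of $G$, I form $C^g_z := \sum_{w: g_w = z} c(w)$ and $C^{pg}_z := \sum_{w: pg_w = z} c(w)$, then add $C^g_z$ to the closer-tally of every $u \in N(v) \cap N[z]$, and $C^{pg}_z$ to the not-farther tally in the same way. Since each centre is a vertex of $G$, the distribution cost is $\sum_{z \in V}(\deg(z)+1) = O(m+n) = O(m)$ by handshake. The main obstacle will be producing the gates and pseudo-gates for every such $w$ within the same $O(m)$ budget. I plan to sweep the BFS layers outward from $v$, taking $g_w := w$ at layer two and selecting $pg_w$ from $N(v) \cap N(w)$ in $O(\deg(w))$ time; at deeper layers the Helly intersection guarantee of Lemma~\ref{lem:pseudo-gate} lets a valid centre be inherited from the BFS-parent layer and verified locally at amortized $O(\deg(w))$ cost per $w$, summing again to $O(m)$ by handshake.
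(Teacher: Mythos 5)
Your reduction coincides with the paper's: for $u \in N(v)$ and $w$ at distance $d \geq 2$ from $v$, the gate characterizes the closer neighbours ($u \in N(g_w)$ iff $dist(u,w) = d-1$) and the pseudo-gate characterizes the not-farther neighbours ($u \in N[pg_w]$ iff $dist(u,w) \leq d$); bucketing the costs by centre and subtracting from $\sum_{w \in A} c(w)$ then yields the three sums, and your handling of $w \in A \cap N[v]$ is fine. The genuine gap is the step you compress into your last sentences: producing a gate \emph{and} a pseudo-gate for every $w \notin N[v]$ within the total ${\cal O}(m)$ budget. This is exactly the part the paper does not reprove but imports from prior work (the observation of [ChD94] and Remarks 1 and 3 of [DuD19+]), and your proposed replacement does not hold up. Already at layer two, you cannot simply ``select $pg_w$ from $N(v) \cap N(w)$ in ${\cal O}(\deg(w))$ time'': a valid pseudo-gate must be adjacent to \emph{every} vertex of $N[v]$ within distance two of $w$; the Helly property guarantees that some element of $N(v) \cap N[w]$ has this property, but an arbitrary common neighbour of $v$ and $w$ does not, and identifying a valid one is precisely the nontrivial part --- it is not a $\deg(w)$-local operation.

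The inheritance claim for deeper layers fails for a structural reason: the sets that the centres must dominate \emph{grow} as $w$ moves away from $v$. If $w'$ is the BFS parent of $w$ (so $dist(v,w') = d-1$), then $Pr(w',N[v]) \subseteq Pr(w,N[v])$ and $N[v] \cap N^{d-1}[w'] \subseteq N[v] \cap N^{d}[w]$, and both inclusions can be strict. Hence $g_{w'}$ and $pg_{w'}$, although they satisfy the distance constraints for $w$, need not be adjacent to the vertices of $N(v)$ that newly enter these sets, so they are not in general valid centres for $w$. Moreover, ``verifying locally'' is not available: deciding whether an inherited candidate dominates $Pr(w,N[v])$ (resp.\ $N[v] \cap N^d[w]$) requires knowing which neighbours of $v$ are at distance exactly $d-1$ (resp.\ at most $d$) from $w$, information that a single BFS from $v$ does not provide, and repairing a failed candidate is again a Helly-type intersection search rather than an ${\cal O}(\deg(w))$ scan. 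So the amortized ${\cal O}(\deg(w))$ bound is unsubstantiated, and without it the lemma's ${\cal O}(m)$ claim --- whose whole content is the linear-time computability of all gates and pseudo-gates --- is not established. To complete your argument you must either cite the procedures of [ChD94] and [DuD19+] as the paper does, or give an actual linear-time construction of these centres.
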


\begin{proof}
Let $S = N[v]$. By Lemma~\ref{lem:gated}, every vertex $w \notin S$ has a gate. Furthermore, as first observed in~\cite{ChD94} (see also Remark $1$ in~\cite{DuD19+}), we can compute a fixed gate $g(w)$ for every vertex $w \notin S$, in total ${\cal O}(m)$ time. For every vertex $z \in N(S)$, let $\alpha(z) = \sum_{w \in A \setminus S}  c(w) \cdot \delta_{g(w) = z}$. Let $u \in S$ be an arbitrary neighbour of vertex $v$. Since every path between $w \notin S$ and $v$ must intersect $S \setminus \{v\}$, we observe that we have $dist(u,w) < dist(v,w)$ if and only if $g(w) \in N(u)$. Clearly, $u$ is the only vertex $w\in S$ such that $dist(u,w) < dist(v,w)$. Therefore, for every neighbour $u \in N(v)$, we have: 
$$q_-(u,A) = c(u) \cdot \delta_{u \in A} + \sum\limits_{z \in N(S) \cap N(u)} \alpha(z).$$ 
This computation takes time ${\cal O}(|N(u)|)$.

\smallskip
\noindent
In the same way, by Lemma~\ref{lem:pseudo-gate}, every vertex $w \notin S$ has a pseudo-gate.
We can also compute a fixed pseudo-gate $pg(w)$ for every vertex $w \notin S$, in total linear time ({\it i.e.}, see Remark $3$ in~\cite{DuD19+}). Let $w \notin S$ be fixed and let $u \in S$ be a neighbour of $v$. If $g(w) \in N(u)$ then, by definition of a pseudo-gate we also have $pg(w) \in N[u]$. Otherwise, $pg(w) \in N[u]$ if and only if $dist(u,w) = dist(v,w) = 1 + dist(w,S)$. As a result, for every vertex $z \in N[S]$, let $\beta(z) = \sum_{w \in A \setminus S}  c(w) \cdot \delta_{pg(w) = z}$. We obtain that for every neighbour $u \in N(v)$: 
$$q_-(u,A) + q_{=}(u,A) = \sum\limits_{u' \in N(v) \cap N[u] \cap A} c(u')  +  \sum\limits_{z \in N[S] \cap N[u]} \beta(z).$$ 
This computation also takes time ${\cal O}(|N(u)|)$.

\smallskip
\noindent
Finally, $q_+(u,A) =  \sum_{w \in A} c(w) - q_-(u,A) - q_=(u,A)$.
\end{proof}

\paragraph{Main Results.}

\begin{theorem}\label{thm:helly-center}
If $G$ is a Helly graph then, for any non-negative cost function $c$, w.h.p., we can compute a central vertex in $\tilde{\cal O}(m\sqrt{n})$ time.
\end{theorem}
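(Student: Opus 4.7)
The plan is to combine random sampling with greedy descent on the unimodal $c$-eccentricity function. Sample each vertex independently with probability $p = \Theta(\sqrt{(\log n)/n})$, obtaining $U(p)$ of expected size $\tilde{\cal O}(\sqrt{n})$. For each $u \in U(p)$ compute $e_c(u)$ by a BFS from $u$, at total cost $\tilde{\cal O}(m\sqrt{n})$. Let $u^\ast \in \argmin_{u \in U(p)} e_c(u)$; starting from $u^\ast$, repeatedly move from the current vertex $v$ to any neighbour $u \in N(v)$ with $e_c(u) < e_c(v)$, stopping when no such $u$ exists. By unimodality (Lemma~\ref{lem:unimod-ecc}) the final vertex is a global minimizer, and by Lemma~\ref{lem:unimodal} the descent terminates in $\tilde{\cal O}(1/p) = \tilde{\cal O}(\sqrt{n})$ steps w.h.p.

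The crux, and the step I expect to be the main obstacle, is performing each descent step in $O(m)$ time: naively evaluating $e_c(u)$ for every neighbour $u \in N(v)$ would cost $O(m\deg(v))$. The idea is to never recompute an eccentricity but instead to test, for every $u \in N(v)$ simultaneously, whether $e_c(u) < e_c(v)$; the crucial observation is that this test reduces to a \emph{sum} of indicators over a small witness set, which is exactly what Lemma~\ref{lem:local-step} produces. Concretely, run one BFS from $v$ to obtain $r_w := dist(v,w)$ for every $w$ and the threshold $T := e_c(v)$. For every $u \in N(v)$ one has $dist(u,w) \in \{r_w-1,\, r_w,\, r_w+1\}$, so $c(w)\cdot dist(u,w) \geq T$ forces $c(w)(r_w+1) \geq T$. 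Accordingly define
\[
A_1 := \{\, w : c(w)\cdot r_w = T \,\}, \qquad A_2 := \{\, w : c(w)\cdot r_w < T \leq c(w)(r_w+1) \,\}.
\]
A direct case analysis then yields
\[
e_c(u) < T \quad\Longleftrightarrow\quad q_=(u, A_1) + q_+(u, A_1) + q_+(u, A_2) = 0,
\]
where the $q_\cdot$ quantities are those of Lemma~\ref{lem:local-step} applied with the \emph{constant unit cost} in place of $c$ (so they count witnesses rather than summing their costs). Two invocations of Lemma~\ref{lem:local-step}, one with $A = A_1$ and one with $A = A_2$, then deliver all three quantities for every $u \in N(v)$ in $O(m)$ total time; a final scan of $N(v)$ either returns an improving neighbour or certifies a local, hence global, minimum.

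Summing the two phases gives the claimed $\tilde{\cal O}(m\sqrt{n})$ running time. The only real difficulty is the $O(m)$ implementation of a descent step, which is where I would invest most of the proof-writing: the reformulation via $A_1$ and $A_2$ is exactly what converts the maximum that defines $e_c$ into a sum of indicators that Lemma~\ref{lem:local-step} can aggregate in linear time, thereby circumventing the $\deg(v)$ blow-up of the naive approach.
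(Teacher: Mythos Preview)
Your proposal is correct and follows essentially the same approach as the paper's proof: random sampling to get a good start vertex, then unimodal descent with each step implemented in $O(m)$ time via two calls to Lemma~\ref{lem:local-step}. The only cosmetic differences are that the paper uses the sets $A = A_1$ and $B = A_1 \cup A_2$ (so your condition $q_=(u,A_1)+q_+(u,A_1)+q_+(u,A_2)=0$ becomes $q_=(v,A)=q_+(v,A)=q_+(v,B)=0$) and keeps the original weights $c$ in the $q$-values rather than switching to unit cost; since every $w \in A_1 \cup A_2$ has $c(w)>0$ whenever $T>0$, the two formulations are equivalent.
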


\begin{proof}
Set $p = n^{-\frac 1 2}$, and let $U(p)$ be a corresponding random subset.
By Chernoff bounds the subset $U(p)$ has cardinality $\tilde{\cal O}(\sqrt{n})$ w.h.p., and thus we assume from now on that it is indeed the case. We compute the $c$-eccentricity for every vertex of $U(p)$, that takes ${\cal O}(m|U(p)|) = \tilde{\cal O}(m\sqrt{n})$ time. Then, let $u \in U(p)$ be of minimum $c$-eccentricity. At each step of the algorithm, we search for a neighbour $v$ of the current vertex $u$ such that $e_c(v) < e_c(u)$. If no such neighbour exists then, $u$ is a local minimum for the eccentricity function, and so, by Lemma~\ref{lem:unimod-ecc}, this vertex $u$ is central. Otherwise, we set $u := v$ and then we continue the algorithm for at least one more step. Since all the vertices $u$ considered during the algorithm induce an eccentricity-monotone path, by Lemma~\ref{lem:unimodal} the total number of steps is upper bounded w.h.p. by an $\tilde{\cal O}(\sqrt{n})$.

We can implement each step of this above local-search algorithm in ${\cal O}(m)$ time, as follows.
First we need to observe that for every vertex $u$ and any neighbour $v \in N(u)$, for every vertex $w$ we have $|dist(v,w) - dist(u,w)| \leq 1$. Then, let $A = \{ w \in V \mid c(w) \cdot dist(u,w) = e_c(u) \}$ and let $B = \{ w \in V \mid c(w) \cdot (dist(u,w) + 1) \geq e_c(u) \}$. We have $e_c(v) < e_c(u)$ if and only if $q_+(v,A) = q_=(v,A) = 0$, and in the same way $q_+(v,B) = 0$. Therefore, by applying Lemma~\ref{lem:local-step} twice, we can decide in ${\cal O}(m)$ time whether there exists a neighbour $v \in N(u)$ such that $e_c(v) < e_c(u)$, and if so, compute such a neighbour within the same amount of time.
\end{proof}

\begin{theorem}\label{thm:helly-median}
If $G$ is a Helly graph then, for any non-negative cost function $c$, w.h.p., we can compute the median in $\tilde{\cal O}(m\sqrt{n})$ time.
\end{theorem}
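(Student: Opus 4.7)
The plan is to mimic the proof of Theorem~\ref{thm:helly-center} almost verbatim, replacing the $c$-eccentricity function $e_c$ by the total $c$-distance function $TD_c$. By Lemma~\ref{lem:unimod-td} the function $TD_c$ is unimodal, so a local-search strategy is again justified. First I would draw a random sample $U(p)$ with $p = n^{-1/2}$, compute $TD_c$ for every vertex of $U(p)$ in $\tilde{\cal O}(m\sqrt{n})$ time by running $|U(p)| = \tilde{\cal O}(\sqrt n)$ BFS traversals, and pick a vertex $u \in U(p)$ minimizing $TD_c(u)$. Then, at each step, I look for a neighbour $v \in N(u)$ with $TD_c(v) < TD_c(u)$; if none exists then $u$ is a local, hence global, minimum of $TD_c$ and therefore a median. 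Otherwise I move to $v$ and iterate. The successive vertices form a $TD_c$-monotone path from $u$ to $\argmin(TD_c)$, so by Lemma~\ref{lem:unimodal} the number of iterations is $\tilde{\cal O}(\sqrt n)$ w.h.p.

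To implement each local-search step in ${\cal O}(m)$ time I would apply Lemma~\ref{lem:local-step} with $A = V$. Indeed, for any neighbour $v$ of $u$ and any vertex $w$ we have $|dist(v,w) - dist(u,w)| \leq 1$, so
\[
TD_c(v) - TD_c(u) = \sum_{w \in V} c(w)\bigl(dist(v,w) - dist(u,w)\bigr) = q_+(v,V) - q_-(v,V).
\]
Hence one call to Lemma~\ref{lem:local-step} suffices to compute $TD_c(v)$ for every $v \in N(u)$, and thus to either find a strictly better neighbour or certify that $u$ is a local minimum. This gives a total running time of $\tilde{\cal O}(m\sqrt n)$ w.h.p. for computing one median.

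The only genuinely new ingredient with respect to Theorem~\ref{thm:helly-center} is that the statement asks for \emph{all} medians. Here I would invoke the second part of Lemma~\ref{lem:unimod-td}: the set $M := \argmin(TD_c)$ induces a clique, so once we have one median $u^* \in M$, every other median lies in $N(u^*)$. I would therefore perform one additional call to Lemma~\ref{lem:local-step} at $u^*$ with $A = V$, and return
\[
\{u^*\} \cup \{\, v \in N(u^*) \mid q_+(v,V) = q_-(v,V)\,\},
\]
which is exactly $M$ and costs an extra ${\cal O}(m)$ time.

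The steps are essentially routine given the machinery already in the paper; the main thing to verify carefully is the equivalence $TD_c(v) = TD_c(u^*) \Leftrightarrow q_+(v,V) = q_-(v,V)$ for neighbours of the median (which follows from the $\pm 1$ Lipschitz property above), and the fact that using $A = V$ in Lemma~\ref{lem:local-step} is allowed. There is no additional obstacle comparable in depth to the gate/pseudo-gate constructions underlying Lemma~\ref{lem:local-step}; that lemma was deliberately stated with a general vertex-subset $A$ so it can be reused here.
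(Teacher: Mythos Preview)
Your proposal is correct and follows essentially the same approach as the paper: random sampling to get a good start vertex, local search driven by Lemma~\ref{lem:local-step} with $A=V$ via the identity $TD_c(v)-TD_c(u)=q_+(v,V)-q_-(v,V)$, unimodality from Lemma~\ref{lem:unimod-td} to certify the local minimum, Lemma~\ref{lem:unimodal} to bound the number of steps, and one extra call at the found median together with the clique property to list all medians. The paper's proof is the same argument, written more tersely and with the equivalent formula $TD_c(u)-TD_c(v)=q_-(v,V)-q_+(v,V)$.
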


\begin{proof}
First, we compute a median. Since by Lemma~\ref{lem:unimod-td}, the total $c$-distance function of a Helly graph is unimodal, we can adapt the local-search algorithm of the previous Theorem~\ref{thm:helly-median}. However, we need to explain how we can implement each step of this algorithm in ${\cal O}(m)$ time. For that, let $u \in V$ be fixed and let $v \in N(u)$ be an arbitrary neighbour. Again, recall that for every vertex $w$ we have $|dist(v,w) - dist(u,w)| \leq 1$. Then, we have $TD_c(u) - TD_c(v) = q_-(v,V) - q_+(v,V)$. Therefore, by applying Lemma~\ref{lem:local-step} once, we can decide in ${\cal O}(m)$ time whether there exists a neighbour $v \in N(u)$ such that $TD_c(v) < TD_c(u)$, and if so, compute such a neighbour within the same amount of time.

Finally, let $u \in V$ be a median. Again by using Lemma~\ref{lem:local-step}, in ${\cal O}(m)$ time we can compute all the neighbours $v \in N(u)$ such that $TD_c(u) = TD_c(v)$. Since the median of a Helly graph induces a complete subgraph (Lemma~\ref{lem:unimod-td}), we computed doing so all the medians.
\end{proof}

\section{Radius computation within $k$-Helly graphs}\label{sec:k-helly}

Recall that in~\cite{DuD19+}, we proved that the radius of a Helly graph can be computed w.h.p. in $\tilde{\cal O}(m\sqrt{n})$ time. In this last section, we generalize this result to the $k$-Helly graphs. The radius of a graph $G$ is denoted in what follows by $rad(G) =^{def} \min_v e(v)$.

\begin{theorem}\label{thm:k-helly}
If $G$ is a $k$-Helly graph then, w.h.p., we can compute $rad(G)$ in $\tilde{\cal O}(m\sqrt{kn})$ time.
\end{theorem}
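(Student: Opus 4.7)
The plan is to treat radius computation as a Generalized Linear Program of combinatorial dimension $k$, in the spirit of Amenta~\cite{Ame94}, but with graph-metric primitives (BFSes) in place of constant-time geometric ones.

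First, by the $k$-Helly property, a ball $\bigcap_{v \in V} N^r[v]$ is nonempty iff every $k$-subfamily intersects; as a consequence,
$rad(G) = \max_{B \subseteq V,\ |B|\le k} r(B)$, where $r(B) := \min_{c \in V} \max_{v \in B} dist(c,v)$. Equivalently, there exists a critical subset $B^{*} \subseteq V$ with $|B^{*}|\leq k$ witnessing the radius: $rad(G) = r(B^{*})$. I would search for such a $B^{*}$ by iteratively growing a candidate basis $B$ together with its local optimum. At each round: running BFS from each vertex of $B$, compute a vertex $c_B$ minimizing $\max_{v\in B} dist(c_B,v) = r_B$ (cost $O(|B|\cdot m)$). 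Then run BFS from $c_B$ and check whether some ``violator'' $w$ satisfies $dist(c_B,w) > r_B$. If not, output $r_B = rad(G)$; otherwise, set $B \leftarrow B \cup \{w\}$ and iterate.

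To bound the total cost by $\tilde{\cal O}(m\sqrt{kn})$, I would preface the iteration with a random-sampling phase: draw a random subset $U(p)$ with $p = \sqrt{k/n}$, compute eccentricities from all $u \in U(p)$ (total cost $\tilde{\cal O}(m\sqrt{kn})$), and use the best vertex found there to seed $B$. A Chernoff-type argument analogous to Lemma~\ref{lem:unimodal} should then show that only $\tilde{\cal O}(\sqrt{n/k})$ subsequent rounds are needed w.h.p.; since each round costs $O(km)$ (keeping $|B|$ at most $k + O(\text{rounds})$), the overall running time matches the claim. The extra $\sqrt{k}$ factor compared to the Helly result of~\cite{DuD19+} arises from the increased per-round cost of handling $k$ base vertices instead of $2$.

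The main obstacle will be bounding the number of rounds. For $k=2$ the proof can leverage unimodality of eccentricities (Lemma~\ref{lem:unimod-ecc}), but this property need not hold for $k\geq 3$, so one cannot simply invoke Lemma~\ref{lem:unimodal}. Instead, we must exploit only the LP-type structure afforded by $k$-Helly: each violator $w$ strictly refines the basis, and a Clarkson-style random-sampling argument is needed to bound how often this can happen. Redesigning such an argument so that it works with graph-metric primitives---rather than the constant-time geometric primitives assumed by Amenta~\cite{Ame94}, together with the ``Unique Minimum Condition'' that the authors note fails here---is the subtlety highlighted in the Introduction, and explains why this approach, in its current form, does not also yield a central vertex.
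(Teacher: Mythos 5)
Your reduction $rad(G) = \max_{|B| \le k} r(B)$ is valid for $k$-Helly graphs, and your termination test is sound (no violator means $e(c_B) = r_B \le rad(G)$), but the entire weight of the proof rests on bounding the number of violator rounds by $\tilde{\cal O}(\sqrt{n/k})$, and no argument for this is actually supplied. You correctly note that Lemma~\ref{lem:unimodal} cannot be invoked because the eccentricity function need not be unimodal for $k \ge 3$, and you then defer to ``a Clarkson-style random-sampling argument'' that you never design; this deferred argument is precisely the missing proof. Seeding with a minimum-eccentricity sampled vertex has no quantifiable effect here: Lemma~\ref{lem:unimodal} bounds the length of $f$-monotone paths under unimodality, and nothing analogous constrains your sequence of bases. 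In a plain incremental scheme where $B$ only ever grows (no basis exchange, no constraint reweighting as in Clarkson's algorithm), the number of rounds is not sublinear in general. There is also a cost-accounting slip: if $B$ gains one vertex per round over $\tilde{\cal O}(\sqrt{n/k})$ rounds, recomputing $c_B$ costs ${\cal O}(|B| m)$ per round, so the total is of order $mn/k$, which exceeds $m\sqrt{kn}$ whenever $k = o(n^{1/3})$, in particular for every fixed $k$.

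The paper deliberately avoids the GLP/Clarkson route (as it explains, violation testing and basis computation cannot be implemented in sublinear time on graphs) and instead proves a decision version (Lemma~\ref{lem:decide-rad}) that it combines with binary search over $r$. For a fixed $r$, Lemma~\ref{lem:set-cover} produces in $\tilde{\cal O}(m/\varepsilon)$ time a candidate set $C_0$ containing every vertex of eccentricity at most $r$ and such that every $v \in C_0$ satisfies $|N^r[v]| \ge (1-\varepsilon)n$. Then at most $k$ peeling rounds are performed: pick $v_{i-1} \in C_{i-1}$; if $e(v_{i-1}) \le r$, accept; otherwise the at most $\varepsilon n$ vertices of $S_i = V \setminus N^r[v_{i-1}]$ are processed by ${\cal O}(\varepsilon n)$ BFSes and one sets $C_i = C_{i-1} \cap \bigcap \{ N^r[s] \mid s \in S_i \}$, rejecting if $C_i = \emptyset$. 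If $C_k \neq \emptyset$, a disjointness argument on $S_1,\ldots,S_k$ shows that any $k$-subset $A$ with $\bigcap \{N^r[a] \mid a \in A\} = \emptyset$ would have to meet each $S_i$ exactly once, forcing $\emptyset \neq C_k \subseteq \bigcap \{N^r[a] \mid a \in A\}$, a contradiction; hence all radius-$r$ balls $k$-wise intersect and $k$-Hellyness gives $rad(G) \le r$. The cost $\tilde{\cal O}(m/\varepsilon + kmn\varepsilon)$ is balanced at $\varepsilon = \Theta(\sqrt{\log n /(kn)})$. In other words, the number of rounds is $k$ by construction rather than something to be bounded probabilistically, and the random sampling is used to pre-filter candidate centers, not to seed a basis; this structural idea is what your proposal is missing.
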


This above Theorem~\ref{thm:k-helly} follows from a more general result, that we state next.
Recall that we call a graph $(k,\alpha)$-Helly if, for every family of $k$-wise intersecting balls $N^{r_1}[v_1], N^{r_2}[v_2],\ldots,N^{r_s}[v_s]$, for some arbitrary $s \geq 2$, there exists a vertex $x$ such that $\forall 1 \leq i \leq s, \ dist(x,v_i) \leq r_i + \alpha$. Note that in particular, the $(k,0)$-Helly graphs are exactly the $k$-Helly graphs. It is also known that the chordal graphs are $(2,{\cal O}(1))$-Helly, and more generally the $k$-hyperbolic graphs are $(2,{\cal O}(k))$-Helly~\cite{ChE07}. 

\begin{theorem}\label{thm:almost-k-helly}
If $G$ is a $(k,\alpha)$-Helly graph then, w.h.p., we can compute an additive $+\alpha$-approximation of $rad(G)$ in $\tilde{\cal O}(m\sqrt{kn})$ time.
\end{theorem}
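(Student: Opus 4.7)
The plan is to reduce the task to approximating
\[
R^* \ := \ \max\{\,\rho(S) : S \subseteq V,\ |S| \leq k\,\}, \qquad \rho(S) := \min_{x \in V} \max_{s \in S} dist(x,s),
\]
and then to compute $R^*$ via a Clarkson-style randomized sampling algorithm whose only primitive is a single BFS.

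For the reduction, I apply the $(k,\alpha)$-Helly property to the family of balls $\{\,N^{R^*}[v] : v \in V\,\}$. By definition of $R^*$, every $k$-subfamily of these balls has a common point, so the property guarantees a vertex $x \in V$ with $dist(x,v) \leq R^* + \alpha$ for every $v$, whence $rad(G) \leq R^* + \alpha$. Combined with the trivial lower bound $R^* \leq rad(G)$, this yields
\[
R^* \ \leq \ rad(G) \ \leq \ R^* + \alpha,
\]
so outputting any value in this interval delivers the claimed additive $+\alpha$ one-sided approximation.

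To compute $R^*$, I would design the algorithm in the spirit of Clarkson's iterative sampling for GLPs of combinatorial dimension $k$, the role of ``combinatorial dimension'' being played by the $k$-Helly-type property, which guarantees a witness $k$-subset for every infeasible radius. Maintain weights on $V$ (initially uniform) and repeat the following round. Sample $U \subseteq V$ of size $\Theta(\sqrt{kn \log n})$ under the current weights, run BFS from every $u \in U$ (total $\tilde{\cal O}(m \sqrt{kn})$ time), and extract from the cached distances $x_U := \argmin_{x \in V} \max_{u \in U} dist(x,u)$ and $\rho_U := \max_{u \in U} dist(x_U,u)$ in $O(|U|\cdot n)$ additional time. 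Then BFS once more from $x_U$ and let $V^- := \{v \in V : dist(x_U,v) > \rho_U\}$ be the violators. If $V^- = \emptyset$ then $\rho_U = e(x_U) = rad(G)$ and we return it; otherwise, double the weight of every $v \in V^-$ and start a new round. A standard Clarkson-type potential argument, exploiting that a $k$-vertex witness must appear in the weighted sample after few rounds, shows that $\tilde{\cal O}(1)$ rounds suffice in expectation; moreover, in every round whose sample contains a witness we have $\rho_U \in [R^*, R^*+\alpha]$, so that tracking the running maximum of $\rho_U$ across rounds still yields a value in $[R^*, R^* + \alpha]$ even when the exact termination criterion is not met — this is what makes the same algorithm simultaneously give the exact answer in Theorem~\ref{thm:k-helly} and the $+\alpha$ approximation here.

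The main obstacle is that Amenta's framework assumes both primitives (basis computation and violator testing) cost $O(1)$, whereas on graphs each one essentially requires a $\Theta(m)$-time BFS, and we cannot afford to execute them one constraint at a time. The resolution is to batch within a round: the $|U|$ BFSes from $U$ deliver, in one sweep, all distance data needed for the basis-like computation of $(x_U,\rho_U)$, while the single BFS from $x_U$ tests all $n$ vertices for violation simultaneously. A related subtlety is Amenta's ``Unique Minimum Condition'', which we cannot enforce and which is why we cannot extract a central vertex; the $\alpha$-slack of $(k,\alpha)$-Hellyness absorbs this loss, and the algorithm still returns the radius up to additive $\alpha$.
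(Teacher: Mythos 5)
Your reduction is sound: applying the $(k,\alpha)$-Helly property to the family $\{N^{R^*}[v] \mid v \in V\}$ indeed gives $R^* \leq rad(G) \leq R^* + \alpha$, and this mirrors the correctness argument actually used in the paper. The genuine gap is in the algorithmic core. Your progress/termination argument is exactly the step that the Clarkson/Amenta machinery cannot deliver here, for the reason the paper flags as the ``Unique Minimum Condition''. The potential argument behind iterative reweighting needs that in every non-terminating round the weight of some element of a fixed small witness set $B^*$ (with $\rho(B^*)=R^*$) is doubled. With your proxy violation test ``$dist(x_U,v) > \rho_U$'' this only holds when $\rho_U < \rho(B^*)$; but since minimizers of $\max_{u\in U} dist(\cdot,u)$ are far from unique in a graph, you can have $\rho_U = \rho(B^*)$ while the particular center $x_U$ you picked is not a global center: violators then exist, none of them need lie in $B^*$, and the doubling argument (and likewise the sampling lemma bounding the violator weight, which needs the number of ``extreme'' constraints of a subset to be $O(k)$) breaks down. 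Your fallback does not rescue this: the event ``the sample contains a witness $k$-subset'' has probability roughly $(k\log n/n)^{k/2}$ for a sample of size $\Theta(\sqrt{kn\log n})$ under near-uniform weights, so tracking the running maximum of $\rho_U$ only yields a lower bound on $rad(G)$ with no guarantee that it is within $\alpha$. Finally, even granting the LP-type machinery, Clarkson-style reweighting needs $\Theta(k\log n)$ rounds, and at $\tilde{\cal O}(m\sqrt{kn})$ per round this exceeds the claimed total bound for non-constant $k$; the claim of $\tilde{\cal O}(1)$ rounds is unsupported.

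For contrast, the paper avoids reweighting entirely. It solves a decision version (Lemma~\ref{lem:decide-rad}) and binary-searches over $r$: compute a candidate set $C_0 = D\langle G;r;\varepsilon\rangle$ (Lemma~\ref{lem:set-cover}) so that every surviving candidate misses at most $\varepsilon n$ vertices with its ball $N^r[\cdot]$; then run exactly $k$ rounds, each time picking a candidate $v_{i-1}$, letting $S_i = V \setminus N^r[v_{i-1}]$ (size at most $\varepsilon n$), and intersecting the candidate set with $\bigcap\{N^r[s] \mid s \in S_i\}$ via $|S_i|$ BFS runs. If after $k$ rounds the candidate set is nonempty, a pigeonhole argument over the pairwise disjoint sets $S_1,\ldots,S_k$ shows every $k$-subset of radius-$r$ balls intersects, so $(k,\alpha)$-Hellyness gives $rad(G) \leq r + \alpha$; emptiness at any stage certifies $rad(G) > r$. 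Balancing $\varepsilon = \Theta(\sqrt{\log n/(kn)})$ gives $\tilde{\cal O}(m\sqrt{kn})$ with no dependence on uniqueness of optima. If you want to salvage your route, you would need to replace the per-center violation test by a certified witness-extraction step of this kind; as written, the proof does not go through.
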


We observe that Theorem~\ref{thm:k-helly} is an easy corollary of Theorem~\ref{thm:almost-k-helly}.
The remaining of this section is devoted to the proof of Theorem~\ref{thm:almost-k-helly}.
For that, we need the following lemma, that is based on the same random sampling technique as for Lemma~\ref{lem:unimodal}.

\begin{lemma}[~\cite{DuD19+}]\label{lem:set-cover}
Let $G=(V,E)$ be a graph, let $r$ be a positive integer and let $\varepsilon \in (0;1)$.
There is an algorithm that w.h.p. computes a set $D\langle G; r; \varepsilon \rangle$ in $\tilde{\cal O}(m/\varepsilon)$ time with the following two properties:
\begin{itemize}
\item if $e(v) \leq r$ then $v \in D\langle G; r; \varepsilon \rangle$;
\item conversely, if $v \in D\langle G; r; \varepsilon \rangle$ then  $|N^r[v]| \geq (1-\varepsilon) \cdot n$.
\end{itemize}
\end{lemma}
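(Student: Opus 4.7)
The plan is to construct $D\langle G; r; \varepsilon \rangle$ via a standard landmark/hitting-set argument on a small random sample. First, I would draw a set $U$ of size $|U| = \lceil 2\varepsilon^{-1} \ln n \rceil = \tilde{\cal O}(1/\varepsilon)$ vertices uniformly at random, say with replacement so that the samples are i.i.d. Next, I would run a BFS from every $u \in U$, which takes total time $|U| \cdot {\cal O}(m) = \tilde{\cal O}(m/\varepsilon)$ and produces, for each $u \in U$, the distance vector $dist(u,\cdot)$. Finally, I would set
\[
D\langle G; r; \varepsilon \rangle \; := \; \{ v \in V : dist(u,v) \leq r \text{ for every } u \in U \},
\]
which takes an additional ${\cal O}(n|U|) = \tilde{\cal O}(n/\varepsilon)$ time to assemble from the distance vectors and fits within the claimed time budget.

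The first property is deterministic: if $e(v) \leq r$, then every vertex of $V$ lies in $N^r[v]$, so in particular every $u \in U$ does, whence $v \in D\langle G; r; \varepsilon \rangle$. The second property is where randomness enters, and I would prove it by a contrapositive/union-bound argument. Fix a vertex $v$ with $|N^r[v]| < (1-\varepsilon) n$, and let $F_v := V \setminus N^r[v]$, so that $|F_v| > \varepsilon n$. A single uniform sample lands outside $F_v$ with probability at most $1-\varepsilon$, so by independence of the $|U|$ samples,
\[
\Pr[\, U \cap F_v = \emptyset \,] \; \leq \; (1-\varepsilon)^{|U|} \; \leq \; e^{-\varepsilon |U|} \; \leq \; n^{-2}.
\]
A union bound over the at most $n$ such ``bad'' vertices shows that, with probability at least $1 - 1/n$, every bad $v$ is correctly excluded from $D\langle G; r; \varepsilon\rangle$, which is precisely the second guarantee.

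The main obstacle is essentially none: this is a textbook hitting-set argument, and no use is made of the graph structure beyond the ability to run a BFS. The single subtlety to watch is that $|U|$ must be chosen large enough that the failure probability $n \cdot (1-\varepsilon)^{|U|}$ remains inverse-polynomial after the union bound over all potential bad centres; the choice $|U| = \Theta(\varepsilon^{-1}\log n)$ is essentially tight for this purpose, and a smaller sample would only give a constant-probability guarantee inconsistent with the ``w.h.p.'' wording. This is the same sampling principle as in Lemma~\ref{lem:unimodal}, here specialised to certifying that a vertex's $r$-ball contains at least a $(1-\varepsilon)$-fraction of $V$.
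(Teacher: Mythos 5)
Your proof is correct, and it uses exactly the random-sampling/hitting-set technique that the paper indicates for this lemma (it is stated as being ``based on the same random sampling technique as for Lemma~\ref{lem:unimodal}'', and the lemma is imported from~\cite{DuD19+} without a new proof). The construction, the deterministic first property, the union bound with $|U| = \Theta(\varepsilon^{-1}\log n)$, and the $\tilde{\cal O}(m/\varepsilon)$ running-time accounting are all as intended.
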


Equipped with Lemma~\ref{lem:set-cover}, we are now ready to prove the following decision version of Theorem~\ref{thm:almost-k-helly}:

\begin{lemma}\label{lem:decide-rad}
Let $G=(V,E)$ be a $(k,\alpha)$-Helly graph, and let $r$ be a positive integer.
There is an algorithm that, w.h.p., runs in $\tilde{\cal O}(m\sqrt{kn})$ time, and satisfies the following two properties:
\begin{itemize}
\item If the algorithm rejects, then $rad(G) > r$;
\item If the algorithm accepts, then $rad(G) \leq r + \alpha$.
\end{itemize}
\end{lemma}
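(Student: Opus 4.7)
The plan is to combine Lemma~\ref{lem:set-cover} with an iterative ``witness-building'' scheme in the spirit of Clarkson's algorithm for bounded-combinatorial-dimension generalized linear programs, echoing the connection to Amenta's framework made in the introduction. First, I would compute $D := D\langle G; r; 1/(ck)\rangle$ for a suitable constant $c$, in $\tilde{\cal O}(mk)$ time. If $D$ is empty then I reject; this is correct, since by Lemma~\ref{lem:set-cover} any vertex of eccentricity at most $r$ belongs to $D$, so $D = \emptyset$ implies $rad(G) > r$.

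Otherwise, I would run at most $k$ rounds, maintaining a witness set $W \subseteq V$ with $|W| \leq k$, initially empty. At the start of each round, I compute the current candidate set $C := D \cap \bigcap_{w \in W} N^r[w]$ via one BFS from each $w \in W$. If $C = \emptyset$ then I reject: any vertex $x$ with $e(x) \leq r$ would satisfy $x \in D$ and $dist(x,w) \leq r$ for every $w \in W$, hence $x \in C$, a contradiction. Otherwise, draw a random subset $U \subseteq C$ of size $\tilde{\cal O}(\sqrt{n/k})$ and run BFS from each $u \in U$; if some $u$ has $e(u) \leq r + \alpha$ then I accept (correct, since $rad(G) \leq e(u) \leq r + \alpha$). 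Otherwise each $u \in U$ supplies a certificate $z_u$ with $dist(u,z_u) > r + \alpha$, and I extend $W$ with a carefully chosen new witness $z$ derived from these certificates.

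The running-time accounting is as follows: $\tilde{\cal O}(mk)$ for computing $D$, plus $\tilde{\cal O}(m\sqrt{n/k})$ per round over at most $k$ rounds, giving $\tilde{\cal O}(m\sqrt{kn})$ in total, which matches the claimed bound.

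The main obstacle, in my view, is to design the witness-selection rule and the sample size so that the algorithm always reaches a decision within $k$ rounds. In the regime $rad(G) > r + \alpha$, the $(k,\alpha)$-Helly hypothesis guarantees the existence of $k$ vertices $v_1, \ldots, v_k$ with $\bigcap_{i} N^r[v_i] = \emptyset$, and one must argue that the witness rule discovers such a configuration within $k$ rounds (so that the test $C = \emptyset$ eventually triggers rejection). In the regime $rad(G) \leq r$, one must ensure that the random sample $U$ hits an approximate center with high probability; for this it is natural to exploit the bound $|N^r[v]| \geq (1 - 1/(ck))\,n$ for $v \in D$ to show that, at every round, a large fraction of $C$ consists of vertices with eccentricity at most $r + \alpha$. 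Balancing these two requirements, and in particular calibrating the $\tilde{\cal O}(\sqrt{n/k})$ sampling budget against the $k$ iterations so that the combinatorial dimension of the ball-intersection structure is fully exploited, is where the bulk of the technical work is expected to lie.
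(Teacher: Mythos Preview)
Your outline has a real gap, and it is precisely the one you flag: you never specify a witness-selection rule that forces termination in $k$ rounds, and I do not see how to make your one-witness-per-round scheme work. Adding a single $z$ to $W$ at each step only guarantees $|W|=k$ after $k$ rounds; there is no reason the resulting $k$ balls $N^r[w]$, $w\in W$, should have empty intersection even when $rad(G)>r+\alpha$. Likewise, your random sampling inside $C$ lacks a correctness argument: when $rad(G)\le r$ you know \emph{some} vertex of $C$ has eccentricity $\le r$, but nothing in the setup implies that a large fraction of $C$ has eccentricity $\le r+\alpha$, so a sample of size $\tilde{\cal O}(\sqrt{n/k})$ need not hit one.

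The paper's argument avoids both difficulties by a different, simpler idea: at step $i$ it picks an \emph{arbitrary} surviving candidate $v_{i-1}\in C_{i-1}$, and instead of extracting one witness it adds the \emph{entire} miss-set $S_i:=V\setminus N^r[v_{i-1}]$ to the constraint family, setting $C_i:=C_{i-1}\cap\bigcap_{s\in S_i}N^r[s]$. Because $v_{i-1}\in D\langle G;r;\varepsilon\rangle$ one has $|S_i|\le\varepsilon n$, so this costs $O(mn\varepsilon)$ per round. The sets $S_1,\ldots,S_k$ are pairwise disjoint (each $v_{i-1}$ already lies in $N^r[s]$ for all $s$ in earlier $S_j$'s), so if $C_k\neq\emptyset$ then any putative violating $k$-set $A$ with $\bigcap_{a\in A}N^r[a]=\emptyset$ must meet every $S_i$ and hence be contained in $\bigcup_i S_i$; but then $C_k\subseteq\bigcap_{a\in A}N^r[a]=\emptyset$, a contradiction. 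Thus $C_k\neq\emptyset$ certifies $k$-wise intersection of all radius-$r$ balls, and the $(k,\alpha)$-Helly property yields $rad(G)\le r+\alpha$. No sampling of candidates and no clever witness choice are needed. The running time is $O(m\log n/\varepsilon+kmn\varepsilon)$, balanced at $\varepsilon=\Theta(\sqrt{\log n/(kn)})$, not $1/(ck)$ as you propose.
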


Note that by a classical dichotomic argument, Lemma~\ref{lem:decide-rad} is equivalent to Theorem~\ref{thm:almost-k-helly}.

\begin{proof}
We first describe the algorithm, before proving its correctness and then, optimizing its running time.
Let $\varepsilon > 0$ to be fixed later in the proof. We proceed as follows:
\begin{itemize}
\item We compute a set $C_0 = D\langle G; r; \varepsilon \rangle$ such as in Lemma~\ref{lem:set-cover}. Such a set $C_0$ can be computed w.h.p. in ${\cal O}(m\log{n}/\varepsilon)$ time. Furthermore, if $C_0 = \emptyset$, then $rad(G) > r$, and we are done.
\item Then, for $i = 1\ldots,k$, we select an arbitrary vertex $v_{i-1} \in C_{i-1}$ and we compute $N^{r}[v_{i-1}]$. It takes linear time. If $e(v_{i-1}) \leq r$, then $rad(G) \leq r$, and we are done. Otherwise, let $S_i = V \setminus N^r[v_{i-1}]$. We compute $C_i := C_{i-1} \cap \bigcap \{ N^r[s_i] \mid s_i \in S_i \}$. Recall that $|S_i| \leq \varepsilon n$, and so, this takes ${\cal O}(mn\varepsilon)$ time. Furthermore, if $C_i = \emptyset$, then $rad(G) > r$, and we are done.
\item Finally, if $C_k \neq \emptyset$, then we accept.
\end{itemize}
Note that if we end the algorithm during the for loop, then its output is always correct. Therefore, we only need to focus on the last step. Specifically, we claim that if $C_k \neq \emptyset$, then $rad(G) \leq r + \alpha$. Indeed, observe that $rad(G) \leq r$ if and only if we have $\bigcap \{ N^r[v] \mid v \in V \} \neq \emptyset$. Conversely, since we assume $G$ to be $(k,\alpha)$-Helly, if the balls of radius $r$ $k$-wise intersect, then $rad(G) \leq r+\alpha$. Suppose by contradiction that there exists a $k$-subset $A$ such that $\bigcap \{ N^r[a] \mid a \in A\} = \emptyset$. Clearly, we cannot have $A \subseteq N^r[v_{i-1}]$, for any $1 \leq i \leq k$. Furthermore by construction the sets $S_1,S_2,\ldots,S_k$ are pairwise disjoint. Therefore, $\forall 1 \leq i \leq k, \ |A \cap S_i| = 1$. However, by construction we also have $C_k = C_0 \cap \bigcap \{ N^k[s] \mid s \in \bigcup_{i=1}^k S_i \}$. Therefore, $\emptyset \neq C_k \subseteq \bigcap\{ N^r[a] \mid a \in A\} = \emptyset$, a contradiction.

\smallskip
\noindent
Overall, the total running time is in ${\cal O}(m\log{n}/\varepsilon + kmn\varepsilon)$. This is optimized when $\varepsilon = \Theta(\sqrt{\log{n}/(kn)})$, and then the running time is in ${\cal O}(m\sqrt{kn\log{n}})$. 
\end{proof}

\bibliographystyle{abbrv}
\bibliography{biblio}

\end{document}